\providecommand{\norm}[1]{\lVert#1\rVert}
\renewcommand{\ket}[1]{|#1\rangle}
\theoremstyle{plain}
\newtheorem{theorem}{Theorem}[section]
\newtheorem{lemma}[theorem]{Lemma}
\newtheorem{proposition}[theorem]{Proposition}
\theoremstyle{definition}
\newtheorem{definition}[theorem]{Definition}
\newtheorem{remark}[theorem]{Remark}
\newcommand{\cat}{\mathcal{C}}
\newcommand{\FP}{\text{FPdim}}
\newcommand{\HH}{\mathcal{H}}
\newcommand{\id}{\mathbf{1}}
\newcommand{\Pent}{\text{Pent}}
\newcommand{\Obj}{\text{Obj}}
\newcommand{\Hom}{\text{Hom}}
\begin{document}

\vspace{1.5cm}

\begin{center}
\LARGE\bfseries
Obstructions to Unitary Hamiltonians in Non-Unitary String-Net Models
\end{center}
 
\vspace{1cm}

{\large Hanshi Yang $^{1}$}

\vspace{0.5cm}
\noindent
$^1$ {\it Northwood High School}\\
\vspace{0.1cm}
{\tt \footnotesize mainhanshiyangemail@gmail.com}

\vspace{1cm}

\begin{abstract}
\normalsize
     
\noindent
The Levin-Wen string-net formalism provides a canonical mapping from spherical fusion categories to local Hamiltonians defining Topological Quantum Field Theories (TQFTs). While the topological invariance of the ground state is guaranteed by the pentagon identity, the realization of the model on a physical Hilbert space requires the category to be unitary. In this work, we investigate the obstructions arising when this construction is applied to non-unitary spherical categories, specifically the Yang-Lee model (the non-unitary minimal model $\mathcal{M}(2,5)$). We first validate our framework by explicitly constructing and verifying the Hamiltonians for rank-3 ($\text{Rep}(D_3)$), rank-5 ($\text{TY}(\mathbb{Z}_4)$), and Abelian ($\mathbb{Z}_7$) unitary categories. We then apply this machinery to the non-unitary Yang-Lee model. Using a custom gradient-descent optimization algorithm on the manifold of $F$-symbols, we demonstrate that the Yang-Lee fusion rules admit no unitary solution to the pentagon equations. We explain this failure analytically by proving that negative quantum dimensions impose an indefinite metric on the string-net space, realizing a Krein space rather than a Hilbert space. Finally, we invoke the theory of $\mathcal{PT}$-symmetric quantum mechanics to interpret the non-Hermitian Hamiltonian, establishing that the obstruction is intrinsic to the fusion ring and cannot be removed by unitary gauge transformations.\end{abstract}

\newpage
\tableofcontents
\newpage

\section{Introduction}

Fusion categories provide the rigorous algebraic framework for describing topological order in two spatial dimensions. A Unitary Fusion Category (UFC) $\cat$ encodes the fusion rules, braiding statistics, and quantum dimensions of the anyonic excitations. A fundamental result in the classification of topological phases is the existence of a lattice realization for any such category: the Levin-Wen model \cite{Levin_2005}. This construction yields an exactly solvable Hamiltonian $H$ acting on a Hilbert space $\HH$, such that the ground state subspace is isomorphic to the Turaev-Viro invariant of the underlying manifold.

The standard formulation of the Levin-Wen model assumes the category $\cat$ is unitary. Unitarity implies that the quantum dimensions $d_i$ are positive real numbers and the $F$-symbols (associativity constraints) satisfy a unitarity condition, $F^\dagger = F^{-1}$. However, the class of spherical fusion categories is strictly larger than the class of unitary fusion categories. Prominent examples include the Yang-Lee category \cite{evans2015nonunitaryfusioncategoriesdoubles} (related to the non-unitary minimal model $\mathcal{M}(2,5)$ in CFT) and the Gaffnian category \cite{Simon_2013}. These structures satisfy all consistency conditions of a topological field theory (pentagon and hexagon equations) but possess objects with negative or complex quantum dimensions. These non-unitary categories often appear in the description of critical points in statistical mechanics models with non-local degrees of freedom or complex fugacities \cite{Fisher_1978}, yet their realization as ground states of local quantum Hamiltonians remains an open problem.

This paper addresses the following mathematical question: \textit{Can a spherical, non-unitary fusion category define a valid quantum Hamiltonian on a standard Hilbert space?}

While it is known that non-unitary TQFTs result in non-positive partition functions \cite{freedman2000modularfunctoruniversalquantum}, the operator-theoretic nature of the obstruction in the Hamiltonian formalism warrants a rigorous examination. Specifically, the presence of negative quantum dimensions suggests a breakdown of the probabilistic interpretation of the wavefunction, leading to negative norm states. We approach this problem by combining numerical optimization over the moduli space of solutions to the pentagon equations with operator algebraic analysis. This dual approach allows us to not only identify the failure of standard unitarity but to characterize the precise algebraic structure that replaces it.

In this work, we will: 
\begin{enumerate}
    \item Formalize the string-net space for non-unitary categories as a Krein space with an indefinite metric, providing a natural setting for negative quantum dimensions.
    \item Numerically verify the non-unitary nature of the Yang-Lee model by demonstrating the non-convergence of the F-symbol solver under strict unitary constraints.
    \item Give a proof that the Hamiltonian for the Yang-Lee model is $\eta$-self-adjoint (pseudo-Hermitian) but not self-adjoint in the standard Hilbert space sense.
    \item Introduce a general obstruction theorem establishing that a physical realization on a positive-definite Hilbert space exists if and only if the category is pseudo-unitary.
\end{enumerate}
 
\section{Formalism}

We recall the algebraic data of a spherical fusion category $\cat$ and define the associated operators on a fixed triangulation of a surface $\Sigma$. Let $\Lambda$ be the trivalent dual graph of the triangulation.

\begin{definition}[String-Net Space]
Let $L(\Lambda)$ be the set of labelings of the edges of $\Lambda$ by simple objects $i, j, k \in \Obj(\cat)$. The string-net vector space $\HH_{\text{pre}}$ is the span of all such labeled graphs. The physical space $\HH$ is a quotient of $\HH_{\text{pre}}$ or, equivalently, the subspace satisfying local fusion constraints at every vertex.
\end{definition}

The Levin-Wen Hamiltonian is defined as $H = - \sum_v Q_v - \sum_p B_p$, where $Q_v$ acts on vertices and $B_p$ acts on plaquettes.

\begin{definition}[Vertex Operator]
Let $v$ be a vertex with incident edges labeled $i, j, k$. The vertex operator $Q_v: \HH \to \HH$ is the orthogonal projection onto the subspace of admissible fusion channels:
\begin{equation}
    Q_v \ket{i, j, k; \mu} = \delta_{ijk} \ket{i, j, k; \mu},
\end{equation}
where $\delta_{ijk} = 1$ if $\Hom_\cat(\id, i \otimes j \otimes k) \neq 0$ and $0$ otherwise.
\end{definition}

\begin{definition}[Plaquette Operator]
Let $\cat$ be a spherical fusion category with simple objects
$\{ i \}$ and quantum dimensions $d_i$.
The plaquette operator $B_p$ acts on a plaquette $p$ by inserting
a loop labeled by $s$ along the boundary of $p$ and resolving it using
the $F$-symbols of $\cat$.
It is defined as
\begin{equation}
    B_p = \sum_{s \in \Obj(\cat)} \frac{d_s}{\mathcal{D}} \, B_p^s,
\end{equation}
where
\begin{equation}
    \mathcal{D} = \sqrt{\sum_i d_i^2}
\end{equation}
is the total quantum dimension.
The action of $B_p^s$ on a basis state is given explicitly by
\begin{equation} \label{eq:Bp_action_explicit}
    B_p^s
    \Bigg| \raisebox{-0.5cm}{
    \begin{tikzpicture}[scale=0.4]
        \draw (0,0) -- (1,1) node[midway, below right] {$j$}; 
        \draw (1,1) -- (2,0) node[midway, above right] {$k$};
        \draw (0,0) -- (-1,1) node[midway, below left] {$i$};
        \draw (-1,1) -- (-2,0) node[midway, above left] {$l$};
        \draw (-1,1) -- (1,1) node[midway, above] {$m$};
        \draw (0,0) -- (0,-1);
        \draw (-2,0) -- (-2,-1);
        \draw (2,0) -- (2,-1);
    \end{tikzpicture}
    } \Bigg\rangle
    =
    \sum_{m'}
    \frac{d_{m'}}{d_m}
    F^{j i m'}_{s m k}
    F^{l k m'}_{s m i}
    \Bigg| \raisebox{-0.5cm}{
    \begin{tikzpicture}[scale=0.4]
        \draw (0,0) -- (1,1) node[midway, below right] {$j$}; 
        \draw (1,1) -- (2,0) node[midway, above right] {$k$};
        \draw (0,0) -- (-1,1) node[midway, below left] {$i$};
        \draw (-1,1) -- (-2,0) node[midway, above left] {$l$};
        \draw (-1,1) -- (1,1) node[midway, above] {$m'$};
        \draw (0,0) -- (0,-1);
        \draw (-2,0) -- (-2,-1);
        \draw (2,0) -- (2,-1);
    \end{tikzpicture}
    } \Bigg\rangle .
\end{equation}
\end{definition}

\begin{remark}
For the operator $H$ to be a valid quantum Hamiltonian, we require $B_p$ to be Hermitian ($B_p^\dagger = B_p$) with respect to the standard inner product on $\HH$. This imposes the unitarity condition on the $F$-symbols:
\begin{equation}
    \sum_{n} (F^{ijk}_l)_{mn} (F^{ijk}_l)_{m'n}^* = \delta_{mm'}.
\end{equation}
We will demonstrate that for non-unitary categories, this condition is violated.
\end{remark}
 
\subsection{Conventions and Normalization}

We briefly fix conventions and normalization choices used throughout the paper,
and explain their relation to the original formulation of Levin and Wen.
        
Throughout, $\cat$ denotes a \emph{spherical fusion category}.
Sphericality ensures that the evaluation of closed string diagrams is
independent of planar isotopy, and in particular that the value of a contractible loop labeled by a simple object $i$ is its quantum dimension $d_i$.
Unitarity will \emph{not} be assumed unless stated explicitly.

We work with a fixed choice of simple objects $\{i\}$, fusion multiplicities
$N_{ij}^k$, and $F$-symbols satisfying the pentagon equation.
Fusion spaces are not assumed to carry a canonical inner product unless
$\cat$ is unitary.               
All $F$-symbols are taken in a fixed but otherwise arbitrary gauge.
  
The total quantum dimension is defined by
\begin{equation}
    \mathcal{D} = \sqrt{\sum_i d_i^2}.
\end{equation}

In the original work of Levin and Wen, the plaquette operator is defined
diagrammatically, and certain normalization factors --- including quantum
dimension factors arising from loop evaluation and bubble removal --- are
implicitly absorbed into those diagrammatic conventions.
In particular, their operator $B_p^s$ differs by an overall normalization
from algebraic definitions commonly used in later categorical formulations.

We adopt an explicit algebraic normalization in which all
quantum-dimension factors are written out.
Specifically, we define the plaquette operator as
\begin{equation}
    B_p = \sum_{s \in \Obj(\cat)} \frac{d_s}{\mathcal{D}} \, B_p^s,
\end{equation}
where the action of $B_p^s$ on basis states is given explicitly in terms of
$F$-symbols and includes the appropriate ratios of quantum dimensions.
With this choice, the definition of $B_p$ is valid for arbitrary spherical
fusion categories and does not rely on hidden unitarity assumptions.

When $\cat$ is unitary and the canonical spherical structure is chosen,
this normalization reduces to the original Levin--Wen definition, and
$B_p$ becomes a Hermitian projector.
For non-unitary categories, the same formal construction defines a commuting
family of operators, but Hermiticity generally fails due to the absence of a
positive-definite inner product.

\section{Verification on Unitary Categories}

Before analyzing the anomalous Yang-Lee case, we explicitly verified the construction for non-trivial unitary categories to prove that it is only happening to the described scenarios. 

\subsection{Case I: Non-Abelian $\text{Rep}(D_3)$}
The representation category of the dihedral group $D_3$ is rank-3 with simple objects $\{\id, \sigma, \rho\}$. The fusion rules are generated by the non-Abelian interaction:
\begin{equation}
    \rho \otimes \rho = \id \oplus \sigma \oplus \rho.
\end{equation}
We test the commutativity $[Q_v, B_p] = 0$ on a forbidden vertex configuration $\ket{\psi} = \ket{\id, \id, \rho}$. Since $N_{\id \id \rho} = 0$, we have $Q_v \ket{\psi} = 0$.
The action of the plaquette operator $B_p$ creates a superposition of fusion states governed by the $F$-symbols. For this model, the critical associator is the rank-3 matrix for $\rho^{\otimes 3} \to \rho$:
\begin{equation}
    F^{\rho\rho\rho}_{\rho} = 
    \begin{pmatrix}
        1/2 & 1/2 & 1/\sqrt{2} \\
        1/2 & 1/2 & -1/\sqrt{2} \\
        1/\sqrt{2} & -1/\sqrt{2} & 0
    \end{pmatrix}.
\end{equation}
We have shown that $F \cdot F^T = I$, confirming the unitarity condition. Consequently, the operator $B_p$ cannot rotate a forbidden state into an allowed subspace, ensuring that $Q_v B_p \ket{\psi} = 0$.

\subsection{Case II: Tambara-Yamagami $\text{TY}(\mathbb{Z}_4)$}
We analyze the rank-5 category $\text{TY}(\mathbb{Z}_4)$, characterized by Abelian group elements $g \in \mathbb{Z}_4$ and a non-Abelian duality defect $\tau$. The defect fusion rule is:
\begin{equation}
    \tau \otimes \tau = \bigoplus_{g \in \mathbb{Z}_4} g.
\end{equation}
The quantum dimension is $d_\tau = \sqrt{4} = 2$. Despite the high rank, our analytic checks confirm that the unitary constraints hold. Specifically, for a test state with external leg $\tau$ at a vertex fusing to vacuum (requiring even group flux), the operator $B_p$ preserves the parity of the fusion channel, ensuring $[Q_v, B_p] = 0$.

\subsection{Case III: Abelian Modular Category $\text{FR}^{7,6}_{1}$}
As a final case, we analyze the rank-7 pointed category associated with the cyclic group $\mathbb{Z}_7$. Here, all quantum dimensions are unity ($d_i = 1$). The fusion ring is isomorphic to the group ring $\mathbb{Z}[\mathbb{Z}_7]$.
The $F$-symbols degenerate to scalars (3-cocycles) $\omega \in H^3(\mathbb{Z}_7, U(1))$.
\begin{equation}
    (F^{ijk}_l)_{mn} = \omega(i,j,k) \delta_{l, ijk} \delta_{m, ij}.
\end{equation}
The unitarity condition simplifies to $|e^{i\theta}|^2 = 1$, which is trivially satisfied.

\section{Numerical Investigation}

To continue to where analytic methods are intractable, we developed a computational pipeline using the \texttt{PyTorch} framework. We formulate the search for topological phases as a constrained optimization problem on the manifold of complex tensors $F \in \mathbb{C}^{N \times N \times N \times N}$.

\subsection{Methodology}
Our computational strategy proceeds in two distinct stages, separating the discrete algebraic classification of the fusion ring from the continuous analytic optimization of the $F$-symbols.

\paragraph{Stage 1: Combinatorial Possibilities.}
We first generate candidate fusion rings by brute-forcing integer tensors $N_{ijk} \in \{0, 1\}$ up to a fixed rank. These candidates are filtered via the associativity constraint of the fusion algebra:
\begin{equation}
    \sum_m N_{ij}^m N_{mk}^l = \sum_m N_{jk}^m N_{im}^l.
\end{equation}
This phase constructs the discrete skeleton of the category — the fusion ring — independent of the underlying Hilbert space structure. For Rank 3, the program identified 40 candidate rings satisfying associativity within seconds.

\paragraph{Stage 2: Manifold Optimization.}
For a fixed fusion ring, we treat the $F$-symbols as learnable parameters. We initialize a complex tensor $F$ with random noise and define a composite loss function $\mathcal{L}_{\text{total}}$ that penalizes violations of the physical axioms:
\begin{equation}
    \mathcal{L}(F) = \norm{\Pent(F)}^2 + \lambda \norm{F F^\dagger - I}^2.
\end{equation}
The first term measures the violation of the pentagon identity (associativity), while the second term imposes the unitarity constraint (Hermiticity). We use the L-BFGS optimizer to calculate what is defined by these physical axioms. A loss $\mathcal{L} \approx 0$ should indicate a valid unitary topological field theory.

\subsection{Automated Discovery of Unitary Theories}
To validate the solver, we ran it on the database of 40 mined fusion rings from AnyonWiki \cite{BridgemanSmallRankUnitaryFusionData} \cite{AnyonWikiZenodo}. The system successfully converged ($\mathcal{L} < 10^{-5}$) for all candidates, confirming that the loss landscape for physical theories is navigable.

We classified the discovered solutions by "fingerprinting" their quantum dimension vectors $\vec{d} = (d_0, \dots, d_{N-1})$. By matching these fingerprints against the AnyonWiki database, we automatically recovered several standard physical models. A summary of the unique phases identified is presented in Table \ref{tab:results}.

\begin{table}[h!]
\centering
\begin{tabular}{@{}lllc@{}}
\toprule
\textbf{Dimensions ($d_i$)} & \textbf{Total Dim ($D$)} & \textbf{Physical Identification} & \textbf{Hermitian Error} \\ \midrule
$[1, 1, 1]$ & $1.732$ & $\text{Rep}(\mathbb{Z}_3)$ & $2.7 \times 10^{-10}$ \\
$[1, 1, \sqrt{2}]$ & $2.000$ & Ising Model / $SU(2)_2$ & $4.8 \times 10^{-10}$ \\
$[1, \phi]$ & $1.902$ & Fibonacci & $4.3 \times 10^{-10}$ \\
$[1, 1, 2]$ & $2.449$ & $\text{Rep}(D_3)$ / $\text{Rep}(S_3)$ & $1.0 \times 10^{-9}$ \\ \bottomrule
\end{tabular}
\caption{\label{tab:results} Automated recovery of unitary TQFT data from the mining pipeline. $\phi \approx 1.618$ is the Golden Ratio. The "Hermitian Error" column reports $\norm{F F^\dagger - I}$, confirming that these standard models admit strictly unitary Hamiltonians.}
\end{table}

\subsection{On the Yang-Lee Category}

We applied this solver to the rank-2 fusion rules of the Yang-Lee category, defined as: 

\begin{definition}[The Yang-Lee Category $\mathcal{C}_{YL}$]
The Yang-Lee category is the rank-2 spherical fusion category associated with the non-unitary minimal model $\mathcal{M}(2,5)$. It is defined by:
\begin{itemize}
    \item \textbf{Simple Objects:} $\{\id, \tau\}$.
    \item \textbf{Fusion Rules:} $\tau \otimes \tau = \id \oplus \tau$.
    \item \textbf{Quantum Dimensions:} $d_\id = 1$, $d_\tau = -\phi^{-1} = \frac{1-\sqrt{5}}{2} \approx -0.618$.
\end{itemize}
This category satisfies the pentagon equations but violates unitarity since $d_\tau < 0$.
\end{definition}

This algebraic structure admits two distinct solutions for the quantum dimension $d_\tau$, corresponding to the roots of the characteristic equation $x^2 - x - 1 = 0$. We initialized the solver with both roots:

\begin{itemize}
    \item {Fibonacci (Unitary):} Input $d_\tau = \phi \approx 1.618$.
    The solver converged rapidly to a solution satisfying both the pentagon and unitarity constraints:
    $$ \mathcal{L}_{\text{Algebraic}} \approx 1.57 \times 10^{-16}, \quad \mathcal{L}_{\text{Physics}} \approx 0.0. $$
    This confirms the existence of the unitary Fibonacci model.
    
    \item {Case B: Yang-Lee (Non-Unitary):} Input $d_\tau = -\phi^{-1} \approx -0.618$.
    The solver successfully minimized the pentagon error ($\mathcal{L}_{\text{Pent}} \to 0$), finding a valid spherical category. However, the unitarity penalty failed to converge, plateauing at a significant non-zero value:
    $$ \mathcal{L}_{\text{Algebraic}} \approx 0.0, \quad \mathcal{L}_{\text{Physics}} \approx 7.4049. $$
\end{itemize}

This numerical result provides strong evidence that the manifold of pentagon solutions for the Yang-Lee ring does not intersect the unitary group $U(N)$. The residual error of $\approx 7.4$ is not a local minimum but represents the bounded distance away from zero between the pseudo-unitary manifold and the unitary group.
      
\subsection{Pseudo-Unitarity Analysis}
To understand the nature of the divergence in Case B, we analyzed the properties of the "best-fit" $F$-tensor found by the solver. While the matrix $F_{\text{YL}}$ failed the standard unitarity check ($F F^\dagger \neq I$), we tested it against a generalized metric.

We defined a metric matrix $S = \text{diag}(1, \text{sgn}(d_\tau))$. For the Yang-Lee case, $S = \text{diag}(1, -1)$. We evaluated the \textit{pseudo-unitarity} error defined as:
\begin{equation}
    \mathcal{E}_{\text{pseudo}} = \norm{F S F^\dagger - S}.
\end{equation}
Remarkably, for the Yang-Lee solution that failed standard unitarity, we found $\mathcal{E}_{\text{pseudo}} < 10^{-10}$. This implies that the solver naturally located a solution that is unitary with respect to an indefinite metric, motivating the Krein space formalism introduced in the following section.

For those interested, the solver and subsequent results are avalible on github here: 
\\
\textcolor{blue}{https://github.com/StoneStoney/AnyonLearner}
 
\section{Indefinite Metric and Krein Space}

The numerical divergence identified in Section 4.3 suggests that the standard Hilbert space formalism is insufficient for the Yang-Lee model. Here is now the structure of a Krein space to accommodate negative quantum dimensions.

\subsection{Definitions}

\begin{definition}[Krein Space]
A Krein space $\mathcal{K}$ is a complex vector space equipped with a non-degenerate Hermitian form $[\cdot, \cdot]$ (the indefinite inner product) that admits a decomposition $\mathcal{K} = \mathcal{K}_+ \oplus \mathcal{K}_-$ such that:
\begin{enumerate}
    \item $\mathcal{K}_+$ and $\mathcal{K}_-$ are orthogonal with respect to $[\cdot, \cdot]$.
    \item $(\mathcal{K}_+, [\cdot, \cdot])$ is a Hilbert space.
    \item $(\mathcal{K}_-, -[\cdot, \cdot])$ is a Hilbert space.
\end{enumerate}
This structure is equivalent to a Hilbert space $(\mathcal{K}, \langle \cdot, \cdot \rangle)$ equipped with a fundamental symmetry operator (metric) $\eta$, such that $\eta = \eta^\dagger = \eta^{-1}$ and $[\psi, \phi] = \langle \psi, \eta \phi \rangle$.
\end{definition}

In the context of the Levin-Wen model, the metric is determined by the categorical dimensions.

\begin{definition}[The Categorical Metric]
Let $\cat$ be a spherical fusion category. Let $\HH$ be the string-net space spanned by labeled graphs. We define the metric operator $\eta: \HH \to \HH$ to be diagonal in the basis of string configurations, given by the sign of the quantum dimensions of the constituent labels. For a single loop of type $i$, the metric acts as:
\begin{equation}
    \eta \ket{i} = \text{sgn}(d_i) \ket{i}.
\end{equation}
\end{definition}

\begin{remark}
For general string-net configurations, $\eta$ acts multiplicatively on connected components via the pivotal trace, reducing to the formula above on single-loop states.
\end{remark}

For the Yang-Lee model, the vacuum $\id$ has dimension $d_\id = 1$, while the anyon $\tau$ has $d_\tau = -\phi^{-1} \approx -0.618$. Thus, the subspace generated by $\tau$-loops constitutes the negative definite sector $\mathcal{K}_-$.

\subsection{$\eta$-Unitarity of $F$-Symbols}

We now establish the generalized unitarity condition satisfied by the $F$-symbols of non-unitary categories.

\begin{definition}[$\eta$-Unitarity]
A matrix $U$ is said to be $\eta$-unitary (or pseudo-unitary) if it preserves the indefinite inner product:
\begin{equation}
    [U \psi, U \phi] = [\psi, \phi] \quad \forall \psi, \phi \in \mathcal{K}.
\end{equation}
This is equivalent to the condition $U^\dagger \eta U = \eta$.
\end{definition}

\begin{proposition}
Let $\cat_{YL}$ be the Yang-Lee fusion category. The $F$-matrices of $\cat_{YL}$ are not unitary but are $\eta$-unitary with respect to the metric $\eta = \text{diag}(1, -1)$ in the fusion space $V_{\tau\tau\tau}^\tau$.
\end{proposition}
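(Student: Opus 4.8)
The plan is to reduce the proposition to an explicit $2\times 2$ computation, since the fusion rule $\tau\otimes\tau=\id\oplus\tau$ forces the only nontrivial associator $F := F^{\tau\tau\tau}_\tau$ to be a $2\times 2$ matrix whose rows and columns are indexed by the internal fusion channel ($\id$ or $\tau$). First I would pin down this matrix. Solving the pentagon equation for this fusion rule (equivalently, specializing the known one-parameter family of solutions for $\tau\otimes\tau=\id\oplus\tau$ categories) yields, in the standard gauge,
\begin{equation}
    F = \begin{pmatrix} d^{-1} & d^{-1/2} \\ d^{-1/2} & -d^{-1} \end{pmatrix}, \qquad d := d_\tau,
\end{equation}
where $d$ is a root of $d^2 = d+1$. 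The decisive observation is that for the Yang-Lee branch $d = -\phi^{-1} < 0$ the diagonal entry $d^{-1} = -\phi$ stays real while the off-diagonal entry $\beta := d^{-1/2}$ becomes \emph{purely imaginary}, so that $\beta^2 = d^{-1} = -\phi$ and hence $\bar\beta = -\beta$ with $\abs{\beta}^2 = \phi$. This is the algebraic signature of the negative quantum dimension and is the source of every sign discrepancy below.

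Next I would establish non-unitarity by computing $FF^\dagger$ directly. Because $\beta$ is imaginary, the $(1,1)$ entry evaluates to $d^{-2} + \abs{\beta}^2 = \phi^2 + \phi = 2\phi+1 \neq 1$, so $FF^\dagger \neq I$; the imaginary off-diagonal contributes $+\phi$ where the unitary Fibonacci branch ($d=\phi>0$, real $\beta$) would contribute $-\phi$ and restore the identity. This isolates precisely which entry obstructs ordinary unitarity.

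For the $\eta$-unitarity claim I would take $\eta = \text{diag}(\text{sgn}\,d_\id,\,\text{sgn}\,d_\tau) = \text{diag}(1,-1)$, which is the categorical metric of the earlier definition restricted to $V^\tau_{\tau\tau\tau}$ (the internal channels $\id,\tau$ carry signs $+1,-1$). I would then verify $F^\dagger \eta F = \eta$ entrywise: the diagonal entries give $d^{-2} - \abs{\beta}^2 = \phi^2 - \phi = 1$ and $\abs{\beta}^2 - d^{-2} = \phi - \phi^2 = -1$, while the off-diagonal entries vanish because they are proportional to $\beta + \bar\beta = 0$. The computation closes using only $\phi^2 = \phi+1$ and $\bar\beta = -\beta$; conceptually, the sign flip built into $\eta$ exactly cancels the $i^2=-1$ produced by the imaginary off-diagonal, which is why the same data that breaks ordinary unitarity respects the indefinite form.

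The routine linear algebra is not where the difficulty lies. The main obstacle is conceptual bookkeeping: justifying that $\eta=\text{diag}(1,-1)$ is the \emph{canonical} metric rather than one reverse-engineered to fit, and fixing the gauge and branch of $d^{-1/2}$ so that the stated matrix is correct. The clean way to handle both is to observe that $F$ is the change of basis between the two bracketings of $\tau\otimes\tau\otimes\tau$, and that both bracketings compute the same intrinsic pairing on $V^\tau_{\tau\tau\tau}$ obtained by closing diagrams via the spherical (pivotal) trace. In the left-associated basis this pairing is diagonal with entries proportional to the loop value $d_{\text{internal}}$, i.e.\ to $\text{sgn}(d_{\text{internal}})$ after normalization, which is exactly $\eta$; since the pairing is defined independently of bracketing, any such $F$ must preserve it, making $\eta$-unitarity structural and gauge-robust. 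Verifying this diagrammatic origin of $\eta$, and checking that the chosen gauge is compatible with sphericality, is the part that requires genuine care.
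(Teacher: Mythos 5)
Your proposal is correct and follows essentially the same route as the paper: exhibiting the explicit $2\times 2$ matrix $F^{\tau\tau\tau}_\tau$ (your parametrized form with $d=-\phi^{-1}$ and imaginary $d^{-1/2}$ specializes exactly to the paper's $\begin{pmatrix} -\phi & i\sqrt{\phi} \\ i\sqrt{\phi} & \phi \end{pmatrix}$), checking $FF^\dagger \neq I$, and verifying $F^\dagger \eta F = \eta$ for $\eta = \mathrm{diag}(1,-1)$. Your closing observation that $\eta$ arises intrinsically as the spherical-trace pairing on $V^\tau_{\tau\tau\tau}$, rather than being chosen ad hoc, is a worthwhile strengthening of the gauge-fixing caveat the paper handles only by fiat ("standard symmetric gauge"), but it does not change the substance of the argument.
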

Let $\mathcal{C}_{YL}$ be the Yang-Lee fusion category. The $F$-matrices of $\mathcal{C}_{YL}$ are not unitary but are $\eta$-unitary with respect to the metric $\eta = \text{diag}(1, -1)$ in the fusion space $V_{\tau\tau\tau}^\tau$.

\begin{proof}
Consider the $F$-move $\tau \otimes \tau \otimes \tau \to \tau$. The fusion space is two-dimensional, spanned by intermediate channels $\mathbf{1}$ and $\tau$. The explicit form of the $F$-matrix, derived from the pentagon equations with $d_\tau = -\phi^{-1}$, is given by:
\begin{equation}
    F_{\text{YL}} = \begin{pmatrix} -\phi & i\sqrt{\phi} \\ i\sqrt{\phi} & \phi \end{pmatrix}.
\end{equation}
This expression is given in the standard symmetric gauge where the pivotal structure is normalized so that the categorical trace matches the minimal-model normalization. We observe that $F_{\text{YL}}$ is symmetric but not real. Calculating the standard product:
\begin{equation}
    F_{\text{YL}} F_{\text{YL}}^\dagger = 
    \begin{pmatrix} \phi^2 + \phi & 2i\phi^{3/2} \\ -2i\phi^{3/2} & \phi^2 + \phi \end{pmatrix} \neq I.
\end{equation}
However, using the metric $\eta = \text{diag}(\text{sgn}(d_\mathbf{1}), \text{sgn}(d_\tau)) = \text{diag}(1, -1)$, we get:
\begin{equation}
    F_{\text{YL}}^\dagger \eta F_{\text{YL}} = 
    \begin{pmatrix} -\phi & -i\sqrt{\phi} \\ -i\sqrt{\phi} & \phi \end{pmatrix}
    \begin{pmatrix} 1 & 0 \\ 0 & -1 \end{pmatrix}
    \begin{pmatrix} -\phi & i\sqrt{\phi} \\ i\sqrt{\phi} & \phi \end{pmatrix}
    = \eta.
\end{equation}
Thus, the $F$-symbol defines a valid basis change only within the Krein space structure.
\end{proof}
\subsection{Pseudo-Hermiticity of the Hamiltonian}

\begin{definition}[$\eta$-Adjoint]
Let $A: \mathcal{K} \to \mathcal{K}$ be a linear operator. The $\eta$-adjoint of $A$, denoted $A^{[\dagger]}$, is the unique operator satisfying:
\begin{equation}
    [A \psi, \phi] = [\psi, A^{[\dagger]} \phi] \quad \forall \psi, \phi \in \mathcal{K}.
\end{equation}
In terms of the metric $\eta$ and the standard Hilbert space adjoint $A^\dagger$, this is given by:
\begin{equation}
    A^{[\dagger]} = \eta A^\dagger \eta^{-1}.
\end{equation}
\end{definition}

\begin{theorem}
The Levin-Wen Hamiltonian $H_{YL}$ constructed from $\cat_{YL}$ is not self-adjoint with respect to the standard Hilbert space inner product. Instead, it is $\eta$-self-adjoint (pseudo-Hermitian).
\end{theorem}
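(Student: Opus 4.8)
The plan is to reduce the self-adjointness statement for the full Hamiltonian $H_{YL} = -\sum_v Q_v - \sum_p B_p$ to the $\eta$-unitarity of the $F$-symbols established in the preceding proposition. First I would fix the categorical metric $\eta$ on $\HH$ as the tensor product of single-loop sign operators $\eta = \bigotimes_e \mathrm{diag}(\mathrm{sgn}(d_i))$ over edges, so that $\eta = \eta^\dagger = \eta^{-1}$ and $\HH$ acquires its Krein structure $[\psi,\phi] = \langle \psi, \eta\phi\rangle$. The goal is then to prove $H_{YL}^{[\dagger]} = H_{YL}$, equivalently $\eta H_{YL}^\dagger \eta^{-1} = H_{YL}$, by treating the two summands separately.

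For the vertex term, $Q_v$ is by definition the projection onto admissible fusion channels and is diagonal in the string-net basis with entries in $\{0,1\}$; hence $Q_v = Q_v^\dagger$ already, and since it commutes with the diagonal metric $\eta$ we get $\eta Q_v^\dagger \eta^{-1} = Q_v$ trivially. The substantive content is the plaquette term. Here I would compute $B_p^\dagger$ from the explicit matrix elements in \eqref{eq:Bp_action_explicit}, whose entries are the products $\tfrac{d_{m'}}{d_m} F^{jim'}_{smk} F^{lkm'}_{smi}$. Conjugating and conjugating by $\eta$ rescales these matrix elements by signs $\mathrm{sgn}(d_{m'})/\mathrm{sgn}(d_m)$ and complex-conjugates the $F$-symbols; the claim is that the $\eta$-unitarity relation $F^\dagger \eta F = \eta$ from the proposition, together with the pivotal-trace compatibility $|d_i| = d_i\,\mathrm{sgn}(d_i)$, exactly reorganizes $\eta B_p^\dagger \eta^{-1}$ back into $B_p$. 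I would carry this out first for a single loop label $s$, showing $\eta (B_p^s)^\dagger \eta^{-1}$ equals $B_p^{s^*}$ (or $B_p^s$ when $s$ is self-dual, as $\tau$ is in $\cat_{YL}$), and then sum against the weights $d_s/\mathcal{D}$, noting that these real weights carry through unchanged.

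The main obstacle I anticipate is bookkeeping the quantum-dimension sign factors consistently across the whole lattice rather than on a single fusion space. The proposition only gives $\eta$-unitarity of $F$ on the local space $V^\tau_{\tau\tau\tau}$, whereas $B_p$ acts on a plaquette whose surrounding edges carry a configuration of signs; I must verify that the local metric used in the proposition glues correctly to the global tensor-product metric, i.e.\ that the $d_{m'}/d_m$ ratios and the $\mathrm{sgn}$ factors from $\eta$ conspire so that the two $F$-factors in \eqref{eq:Bp_action_explicit} each contribute a compensating sign. I would handle this by checking that the product $\tfrac{d_{m'}}{d_m}F^{jim'}_{smk}F^{lkm'}_{smi}$ is manifestly symmetric under the $\eta$-conjugated transpose, which is where the symmetric gauge $F_{\text{YL}} = F_{\text{YL}}^T$ from the proposition is essential. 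Finally I would observe that self-adjointness fails in the ordinary sense precisely because $F_{\text{YL}} F_{\text{YL}}^\dagger \neq I$, so $B_p^\dagger \neq B_p$, and conclude that $H_{YL}$ is $\eta$-self-adjoint but not self-adjoint, completing the proof.
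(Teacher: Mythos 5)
Your proposal is correct and takes essentially the same route as the paper's proof: reduce the pseudo-Hermiticity of $H_{YL}$ to the $\eta$-unitarity of the $F$-symbols from the preceding proposition, conclude $B_p^{[\dagger]} = B_p$, and hence $H_{YL}^\dagger = \eta H_{YL} \eta^{-1} \neq H_{YL}$. If anything, your plan is more careful than the paper's one-paragraph argument, which silently drops the $Q_v$ term, never specifies how the local metric on $V^{\tau}_{\tau\tau\tau}$ glues to a global metric on $\HH$, and asserts rather than computes the matrix-element identity you propose to verify.
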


\begin{proof}
The Hamiltonian is a sum of projectors $B_p$. Since $B_p$ is constructed from $F$-symbols which are $\eta$-unitary (as proven in Proposition 5.4), the resulting operator satisfies $B_p^{[\dagger]} = B_p$. In the standard Hilbert space representation, this implies:
\begin{equation}
    H_{YL}^\dagger = \eta H_{YL} \eta^{-1} \neq H_{YL}.
\end{equation}
\end{proof}

This result explains the failure of standard numerical solvers and indicates that the time evolution operator $U(t) = e^{-iHt}$ is not unitary ($U^\dagger U \neq I$), but rather $\eta$-unitary ($U^\dagger \eta U = \eta$).

\section{Algebraic Obstructions}

Let's generalize the results from the Yang-Lee model to a broader class of categories, establishing rigorous necessary and sufficient conditions for the existence of a physical Hamiltonian.

\subsection{Operator-Algebraic Formulation}

\begin{definition}[Physical String-Net Realization] \label{def:realization}
Let $\cat$ be a spherical fusion category. A \textit{physical realization} of the Levin-Wen model for $\cat$ is a tuple $(\HH, \langle \cdot, \cdot \rangle, H)$, where:
\begin{enumerate}
    \item $\HH$ is a Hilbert space equipped with a positive-definite inner product $\langle \cdot, \cdot \rangle$.
    \item The algebra of local operators $\mathcal{A}_{\text{loc}}$ is represented as a $C^*$-subalgebra of $\mathcal{B}(\HH)$.
    \item $H \in \mathcal{A}_{\text{loc}}$ is a self-adjoint operator, $H = H^\dagger$.
    \item The local projectors $B_p$ constituting $H$ are orthogonal projections: $B_p^2 = B_p = B_p^\dagger$.
\end{enumerate}
\end{definition}

The requirement that $B_p$ be an orthogonal projection imposes a stringent constraint on the scalar coefficients appearing in its definition.

\begin{lemma}[Positivity of the Categorical Trace] \label{lemma:trace}
Let $(\HH, \langle \cdot, \cdot \rangle, H)$ be a physical realization of $\cat$. Let $\id_X$ denote the identity morphism of a simple object $X \in \text{Obj}(\cat)$. Then the categorical dimension $d(X) \coloneqq \text{Tr}_{\cat}(\id_X)$ must satisfy:
\begin{equation}
    d(X) \in \mathbb{R}_{>0}.
\end{equation}
\end{lemma}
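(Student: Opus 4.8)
The plan is to realize the categorical dimension $d(X)$ as the squared norm of an explicit vector in the physical Hilbert space $\HH$, so that positive-definiteness of $\langle\cdot,\cdot\rangle$ forces $d(X)>0$ and Hermiticity of the inner product forces $d(X)\in\RR$. Concretely, I would fix a disk-shaped region of the trivalent graph $\Lambda$ and let $\ket{\psi_X}\in\HH$ be the string-net state carrying a single admissible strand labeled by the simple object $X$ across that region, with trivial labels elsewhere. Since $X$ is a nonzero simple object and the labeling satisfies the vertex constraints, $\ket{\psi_X}\neq 0$, so axiom (1) of Definition \ref{def:realization} already gives $\langle\psi_X,\psi_X\rangle>0$. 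The entire content of the lemma is then to identify this strictly positive number with $\text{Tr}_\cat(\id_X)=d(X)$.

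The key steps, in order, are as follows. First, I would use the C$^*$-structure of $\mathcal{A}_{\text{loc}}\subset\mathcal{B}(\HH)$ together with axiom (4) to pin down how the Hilbert-space adjoint acts on string operators: the operator $W_X$ creating the strand labeled $X$ must have adjoint $W_X^\dagger$ equal to the operator creating the orientation-reversed strand labeled by the dual $\bar X$, as this is the only assignment compatible with $B_p^\dagger=B_p$ and the local relations generated by the $B_p^s$ of Equation \eqref{eq:Bp_action_explicit}. Second, I would compute $\langle\psi_X,\psi_X\rangle=\bra{\Omega}W_X^\dagger W_X\ket{\Omega}$ diagrammatically: stacking the strand $X$ against its dagger $\bar X$ and contracting produces a single closed contractible loop labeled $X$. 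Third, I would invoke sphericality, which by the conventions fixed in Section 2 guarantees that this loop evaluates, independently of planar isotopy, to the categorical trace $\text{Tr}_\cat(\id_X)=d(X)$. Combining the three steps yields $d(X)=\langle\psi_X,\psi_X\rangle$, whence reality and strict positivity follow immediately. As a consistency check I would give an equivalent argument routed through the plaquette operator: writing $B_p=\sum_s \frac{d_s}{\mathcal{D}}B_p^s$, the requirement $B_p^2=B_p=B_p^\dagger$ makes $B_p$ a positive operator, and positivity of $\bra{\psi_X}B_p\ket{\psi_X}$ constrains the quantum-dimension weights $d_s/\mathcal{D}$ to be non-negative, again delivering $d_X\geq 0$ with reality coming from $B_p=B_p^\dagger$.

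The hard part will be Step 2: establishing rigorously that the abstract Hilbert-space adjoint supplied by the C$^*$-embedding \emph{coincides} with diagrammatic orientation-reversal on strands, rather than merely assuming it. This is precisely where unitarity ordinarily enters the Levin--Wen construction through the back door, and it must be justified here without presupposing it. The argument I would make is that any $*$-representation of the local algebra respecting $B_p^2=B_p=B_p^\dagger$ is forced to send the coevaluation morphism to the adjoint of the evaluation morphism, up to the pivotal isomorphism, so the closed-loop contraction in Step 2 cannot produce anything other than the spherical trace. Once this identification is secured, one must also track the quantum-dimension normalization factors of Equation \eqref{eq:Bp_action_explicit} to confirm that no spurious sign or gauge-dependent phase is introduced; sphericality is exactly what removes any such ambiguity, so the contracted value is the genuine categorical trace.

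Finally, I would remark that the lemma functions as the engine of the obstruction program: its contrapositive states that a simple object with $d(X)\notin\RR_{>0}$ admits no physical realization, so the Yang-Lee object $\tau$ with $d_\tau=-\phi^{-1}<0$ is immediately ruled out, consistent with the $\eta$-self-adjointness established earlier. I would therefore phrase the conclusion so that it feeds directly into the pseudo-unitarity criterion of the main theorem, the positive case $d(X)\in\RR_{>0}$ being the sector $\mathcal{K}_+$ on which the indefinite form of the Krein space restricts to a genuine Hilbert-space inner product.
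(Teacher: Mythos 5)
Your proposal is essentially the paper's own proof: both realize $d(X)$ as the squared norm of a pair-creation state acting on the vacuum --- the paper writes $d(X) = \langle \Omega, \mathcal{O}_X^\dagger \mathcal{O}_X \Omega\rangle = \norm{\mathcal{O}_X\Omega}^2$ with $\mathcal{O}_X$ the coevaluation ("cup") operator --- then identify the evaluation map with the Hilbert-space adjoint of coevaluation and invoke positive-definiteness of $\langle\cdot,\cdot\rangle$. The ``hard part'' you flag (proving that the $C^*$-adjoint coincides with diagrammatic orientation reversal, rather than assuming it) is exactly the step the paper discharges by appeal to ``the axioms of a unitary topological field theory,'' so your version is the same argument with that hinge made explicit rather than a genuinely different route.
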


\begin{proof}
Recall that the categorical dimension $d(X)$ corresponds diagrammatically to the evaluation of a closed loop labeled by $X$. In a unitary theory, this loop is the composition of a co-evaluation morphism (a ``cup'') and an evaluation morphism (a ``cap'').

Let $\mathcal{O}_X$ denote the creation operator corresponding to the co-evaluation map. Acting on the vacuum $\Omega$, this operator creates a particle-antiparticle pair state $\ket{\Psi_X}$:
\begin{equation}
    \ket{\Psi_X} = \mathcal{O}_X \ket{\Omega}.
\end{equation}
By the axioms of a unitary topological field theory, the evaluation map (annihilation of the pair) is given by the adjoint operator $\mathcal{O}_X^\dagger$. The quantum dimension is obtained by composing creation and annihilation:
\begin{equation}
    d(X) = \langle \Omega, \mathcal{O}_X^\dagger \mathcal{O}_X \Omega \rangle.
\end{equation}
Using the definition of the adjoint, we can rewrite this as the inner product of the pair state with itself:
\begin{equation}
    d(X) = \langle \mathcal{O}_X \Omega, \mathcal{O}_X \Omega \rangle = \norm{\ket{\Psi_X}}^2.
\end{equation}
Since $\HH$ is a Hilbert space with a positive-definite inner product, and $\ket{\Psi_X}$ is a non-zero state for any simple object $X$, we conclude that $d(X) > 0$.
\end{proof}
\subsection{Gauge Invariance}

What about the possibility that non-unitarity is an artifact of the basis choice in the internal fusion spaces $V_{ij}^k$? 

\begin{definition}[Unitary Gauge Transformation]
A unitary gauge transformation is a family of unitary matrices $u_{ij}^k \in U(V_{ij}^k)$ acting on the fusion spaces. This induces a transformation on the $F$-symbols:
\begin{equation}
    \tilde{F}^{ijk}_l = \left( u_{ij}^m \otimes \id_k \right) F^{ijk}_l \left( \id_i \otimes (u_{jk}^n)^\dagger \right).
\end{equation}
\end{definition}

\begin{proposition}[Invariance of the Obstruction] \label{prop:gauge}
The condition $d(X) \in \mathbb{R}_{>0}$ is invariant under unitary gauge transformations.
\end{proposition}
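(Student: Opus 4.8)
The plan is to show that $d(X)$ is a categorical invariant that makes no reference to a choice of basis in the fusion spaces, so that the unitary gauge action — which is precisely such a change of basis — leaves it untouched. First I would recall from Lemma~\ref{lemma:trace} and the preceding discussion that $d(X) = \text{Tr}_\cat(\id_X)$ is the quantum trace of the identity morphism, computed diagrammatically as the value of a closed loop labeled by $X$. This loop is the composition of a coevaluation (``cup'') $\text{coev}_X : \id \to X \otimes X^*$ and an evaluation (``cap'') $\text{ev}_X : X^* \otimes X \to \id$, closed up using the spherical (pivotal) structure. The crucial observation is that this scalar is the contraction of a cap against a cup, and the contraction of dual vectors is manifestly independent of the basis chosen for the one-dimensional space $V_{X X^*}^{\id}$.

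Next I would track the gauge action explicitly. A unitary gauge transformation assigns a unitary $u_{ij}^k \in U(V_{ij}^k)$ to each fusion space; on the relevant space $V_{X X^*}^{\id}$ this is a phase $u$. The coevaluation, living in $V_{X X^*}^{\id}$, is rescaled by $u$, while the evaluation, living in the dual space, is rescaled by $u^{-1}$ (equivalently $u^\dagger$, by unitarity). This pairing is forced by the rigidity (zigzag) axioms, which are covariant under the gauge action: the identity $(\text{ev}_X \otimes \id_X)\circ(\id_X \otimes \text{coev}_X) = \id_X$ can only be preserved if the factors attached to the cup and the cap are mutually inverse. Consequently the loop value transforms as $d(X) \mapsto u \cdot u^{-1} \cdot d(X) = d(X)$. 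The same cancellation occurs at every fusion/splitting vertex pair in a general closed diagram, so no stray gauge factor survives.

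As an independent confirmation I would invoke the representation-theoretic characterization of the dimension vector. The numbers $d_i$ obey $d_i d_j = \sum_k N_{ij}^k \, d_k$, so $\vec d = (d_i)$ is an eigenvector of each fusion matrix $N_i = (N_{ij}^k)_{jk}$ with eigenvalue $d_i$. The multiplicities $N_{ij}^k = \dim V_{ij}^k$ are nonnegative integers that cannot change under a change of basis in the fusion spaces; they are gauge-invariant by construction. The fixed spherical structure selects one such eigenvector, so the entire dimension vector — and in particular each sign $\text{sgn}(d_i)$ — is determined by gauge-invariant data alone.

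Combining these, $d(X)$ is a fixed complex number attached to the spherical category, so the predicate $d(X) \in \RR_{>0}$ is preserved by every unitary gauge transformation. In particular, the negative dimension $d_\tau = -\phi^{-1}$ of the Yang-Lee category cannot be rotated to a positive value, confirming that the obstruction of Lemma~\ref{lemma:trace} is intrinsic to the fusion ring. The hard part will be the verification underlying the second paragraph: establishing cleanly that the evaluation and coevaluation transform by inverse factors and that the pivotal/spherical structure is genuinely untouched by the gauge action, rather than silently reabsorbing a factor that would shift $d(X)$. Once gauge-covariance of the rigidity axioms is pinned down, the invariance is immediate.
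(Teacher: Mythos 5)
Your proposal is correct and takes essentially the same approach as the paper: both arguments rest on the observation that $d(X)$ is an intrinsic categorical scalar (the loop value / a character on the Grothendieck ring $K_0(\cat)$) defined independently of any basis choice in the fusion spaces, so a unitary gauge transformation, being precisely such a basis change, cannot alter it. Your explicit $u \cdot u^{-1}$ cancellation via the zigzag axioms and the fusion-matrix eigenvector argument simply flesh out, in more detail, what the paper asserts abstractly through invariance of the trace under equivalence.
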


\begin{proof}
The categorical dimension $d(X)$ is defined as the trace of the identity morphism, $d(X) = \text{ev}_X \circ (\text{coev}_X)$. This trace is a function on the Grothendieck ring $K_0(\cat)$ (specifically, a character of the ring). Since gauge transformations correspond to basis changes within the vector spaces $V_{ij}^k$, they define unitarily equivalent monoidal categories. The trace of an operator is invariant under similarity transformations; thus, the scalar values $d(X)$ are intrinsic invariants of the category. No local basis change can map a negative dimension $d(X) < 0$ to a positive one.
\end{proof}

\subsection{The Main Obstruction Theorem}

We now combine the preceding analytic and algebraic observations into a single obstruction
statement characterizing when a Levin--Wen type Hamiltonian can be realized on a
positive-definite Hilbert space.

\begin{theorem}[Necessity of Pseudo-Unitarity]\label{thm:main}
A spherical fusion category $\cat$ admits a physical realization
$(\HH,\langle\cdot,\cdot\rangle,H)$ if and only if $\cat$ is pseudo-unitary.
\end{theorem}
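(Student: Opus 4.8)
The plan is to prove the biconditional in Theorem~\ref{thm:main} by separating the two implications, anchoring the forward direction in the positivity lemma and the reverse direction in the observation that pseudo-unitarity collapses the Krein metric to the identity.

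For the necessity (``only if'') direction, I would assume a physical realization $(\HH,\langle\cdot,\cdot\rangle,H)$ exists in the sense of Definition~\ref{def:realization}. Then $\HH$ carries a positive-definite inner product and each $B_p$ is an orthogonal projection. Applying Lemma~\ref{lemma:trace} to every simple object $X$ immediately forces $d(X) > 0$, so the spherical structure has strictly positive quantum dimensions, which is precisely the statement that $\cat$ is pseudo-unitary (equivalently $\dim\cat = \FP(\cat)$). Proposition~\ref{prop:gauge} then certifies that this conclusion is intrinsic: no unitary gauge transformation can repair a negative dimension, so the obstruction is a genuine invariant of $\cat$ rather than an artifact of the chosen basis in the fusion spaces $V_{ij}^k$. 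This direction is essentially bookkeeping once the lemma and the gauge-invariance proposition are in hand.

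For the sufficiency (``if'') direction, I would start from pseudo-unitarity, i.e.\ $d_i > 0$ for all simple $i$. The first step is to note that the categorical metric $\eta = \mathrm{diag}(\mathrm{sgn}(d_i))$ degenerates to the identity, so the indefinite form $[\cdot,\cdot] = \langle\cdot,\eta\cdot\rangle$ coincides with the positive-definite form $\langle\cdot,\cdot\rangle$; the Krein space of Section~5 has $\mathcal{K}_- = 0$ and is a genuine Hilbert space. Consequently $\eta$-unitarity reduces to ordinary unitarity and $\eta$-self-adjointness to ordinary self-adjointness, so the pseudo-Hermiticity machinery specializes to standard Hermiticity. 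The second step is to exhibit an actual unitary gauge for the $F$-symbols, after which the standard Levin--Wen computation (using only the pentagon identity and the spherical axioms) shows that the $B_p^s$ commute and that $B_p = \sum_s (d_s/\mathcal{D})\,B_p^s$ is idempotent with positive coefficients $d_s/\mathcal{D}$; unitarity of the $F$-symbols then upgrades the $\eta$-self-adjointness (the analogue of the $\eta$-unitarity proposition proved earlier) to $B_p^\dagger = B_p$. This makes each $B_p$ an orthogonal projection, renders $H = -\sum_v Q_v - \sum_p B_p$ self-adjoint, and lets the local operators generate a $C^*$-subalgebra of $\mathcal{B}(\HH)$, verifying all four conditions of Definition~\ref{def:realization}.

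The hard part will be the existence of the unitary gauge used in the second step: positivity of the dimensions alone does not visibly trivialize the gauge freedom on the $F$-symbols, and the delicate point is to produce a concrete basis of the fusion spaces in which $(F^{ijk}_l)^\dagger = (F^{ijk}_l)^{-1}$. I expect to lean here on the structure theory of pseudo-unitary fusion categories, namely the fact that such a category carries a canonical positive spherical structure with $d_X = \FP(X) > 0$ and admits a compatible unitary (indeed $C^*$) model of its fusion spaces, which supplies the required unitary gauge. Granting that input, checking idempotency and Hermiticity of $B_p$ in the gauge, together with the positivity of the loop and bubble normalizations, is the routine remainder, and the two implications combine to give the stated equivalence.
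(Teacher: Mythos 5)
Your proposal is correct and takes essentially the same route as the paper: the forward direction rests on Lemma~\ref{lemma:trace} forcing $d(X)>0$, and the converse rests on reconstruction results for pseudo-unitary fusion categories that supply a unitary gauge for the $F$-symbols (the ``hard part'' you correctly flag), which is exactly the paper's argument via its citation of Yamagami. The only cosmetic difference is that the paper makes explicit the Etingof--Nikshych--Ostrik uniqueness of positive dimension functions on $K_0(\cat)$ to pass from $d(X)>0$ to $d(X)=\FP(X)$, a step you compress into your parenthetical identification of positive dimensions with pseudo-unitarity.
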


\begin{proof}
$(\Rightarrow)$  
Assume that a physical realization exists.
By Lemma~\ref{lemma:trace}, the categorical dimensions $d(X)$ of all simple objects
$X\in\cat$ are strictly positive real numbers.
By a theorem of Etingof, Nikshych, and Ostrik~\cite{etingof2017fusioncategories},
the Frobenius--Perron dimension $\FP(X)$ is the unique positive dimension function
on the Grothendieck ring $K_0(\cat)$.
It follows that $d(X)=\FP(X)$ for all simple objects.
By definition, this implies that $\cat$ is pseudo-unitary.

$(\Leftarrow)$  
Conversely, suppose that $\cat$ is pseudo-unitary.
Then $\cat$ admits a spherical structure for which the categorical dimensions coincide
with the Frobenius--Perron dimensions, $d(X)=\FP(X)$.
In particular, all categorical dimensions are real and strictly positive.
Standard reconstruction results for pseudo-unitary fusion categories
(e.g.\ \cite{yamagami1999ctensorcategoriesfreeproduct})
imply that $\cat$ admits a unitary fusion category structure with respect to this spherical
normalization.
With this choice, the associated $F$-symbols may be taken to satisfy the unitarity condition
\[
\sum_n F_{mn}\,\overline{F}_{m'n}=\delta_{mm'} ,
\]
ensuring that the Levin--Wen plaquette operators define orthogonal projections.
Consequently, the resulting Hamiltonian is Hermitian and provides a physical realization
of $\cat$.
\end{proof}

\begin{remark}
This theorem establishes that a standard quantum mechanical realization (as defined in Definition 6.1, requiring a positive-definite inner product) is impossible for:
\begin{itemize}
    \item The Yang-Lee model ($d_\tau < 0$).
    \item The Gaffnian model (non-integral dimensions leading to non-unitary $S$-matrices).
\end{itemize}
Crucially, this result reconciles the findings of Section 5: The Yang-Lee model does admit a consistent lattice construction, but only if one abandons the requirement of a positive-definite metric. The obstruction proven here confirms that the transition to a Krein space (indefinite metric) is a mathematical necessity for non-pseudo-unitary categories.
\end{remark}
 
\section{Connection to $\mathcal{PT}$-Symmetry}

The identification of the Yang-Lee model as a Krein space system establishes a direct link to $\mathcal{PT}$-symmetric quantum mechanics \cite{Bender_1998}. In this framework, a non-Hermitian Hamiltonian $H$ may still possess a real spectrum if it commutes with the combined operator $\mathcal{PT}$, where $\mathcal{P}$ is parity and $\mathcal{T}$ is time-reversal.

In our string-net construction, the metric operator $\eta$ plays the role of the parity operator $\mathcal{P}$.
\begin{equation}
    \eta = \bigoplus_i \text{sgn}(d_i) \Pi_i \longleftrightarrow \mathcal{P}.
\end{equation}
The operator $\eta$ assigns a "parity" of $-1$ to the non-unitary sector ($\tau$-anyons) and $+1$ to the vacuum sector. The condition $H^\dagger = \eta H \eta^{-1}$ derived in Section 5.3 is exactly the condition of pseudo-Hermiticity.

While the local excitations ($\tau$-anyons) carry negative norms indicative of ghosts, the global topological ground state remains physically well-behaved. The norm of the ground state on a sphere is proportional to the total quantum dimension squared:
\begin{equation}
    \norm{\Omega_{S^2}}^2_\eta = \mathcal{D}^2 = \sum_{i} d_i^2.
\end{equation}
For the Yang-Lee model, although $d_\tau < 0$, the square is positive ($d_\tau^2 = \phi^{-2}$), yielding $\mathcal{D}^2 > 0$. This suggests that the topological vacuum lies within the positive-definite sector $\mathcal{K}_+$, and the indefinite metric instabilities arise only upon the creation of quasiparticles.

The fact that $H_{YL}$ is pseudo-Hermitian suggests that, despite being non-unitary, the model may define a physically consistent theory in the regime of unbroken $\mathcal{PT}$-symmetry. We do not guarantee spectral reality in general, only pseudo-Hermiticity.

For the real spectrum, Pseudo-Hermitian operators can have entirely real spectra. This suggests the energy levels of the Yang-Lee model on a lattice might be real, despite the non-Hermitian Hamiltonian.
 
While fusion categories and their quantum dimensions are rigid and do not admit continuous deformations, the microscopic non-Hermitian lattice Hamiltonians whose infrared limits are described by different TQFTs depend on continuous couplings. As these parameters are tuned, the system may encounter exceptional points where eigenvalues and eigenvectors coalesce and pseudo-Hermiticity or $\mathcal{PT}$ symmetry may be spontaneously broken. Across such singularities, the effective low-energy topological description can change discontinuously, for example from a unitary Fibonacci phase to a non-unitary Yang–Lee phase. 

In this sense, non-unitary TQFTs should not be discarded as unphysical, but rather understood as effective descriptions of open quantum systems or non-Hermitian topological phases.

\section{Conclusion}

In this work, we addressed the problem of realizing non-unitary topological phases as ground states of local lattice Hamiltonians. Through a combination of numerical optimization over the manifold of $F$-symbols and operator-algebraic analysis, a method we introduced earlier (\textcolor{blue}{https://github.com/StoneStoney/AnyonLearner} for the solver), we established two main results.
                 
First, we proved an Obstruction Theorem (6.5): a spherical fusion category admits a physical realization on a standard, positive-definite Hilbert space if and only if it is pseudo-unitary. This explains the failure of standard numerical techniques when applied to the Yang-Lee model; the negative quantum dimension ($d_\tau < 0$) creates an intrinsic barrier that no unitary gauge transformation can remove.

Second, we demonstrated that this obstruction is resolved by extending the formalism to Krein spaces. We showed that the Yang-Lee string-net model naturally induces an indefinite metric $\eta$, determined by the signs of the quantum dimensions. Within this framework, the Hamiltonian is not Hermitian but $\eta$-self-adjoint (pseudo-Hermitian).

Non-unitary TQFTs represent a class of topological phases that exist outside the standard Dirac-von Neumann axioms, potentially realizable in $\mathcal{PT}$-symmetric optical systems or as effective field theories for critical points where probability conservation is effectively relaxed.

\section{Acknowledgments}

I thank Du Pei for suggesting the problem and for his continuous mentorship and guidance throughout this work.

\appendix
\section{Appendix: Rigidity within the unitary class for Tambara-Yamagami categories}

In this appendix, we provide an alternative proof to the standard one \cite{TAMBARA1998692} demonstrating the rigidity of the unitary structure for the class of Tambara-Yamagami (TY) categories. We include this proof to demonstrate that high-rank or complexity is not the source of the obstruction; despite TY categories possessing non-integral dimensions, their unitary structure is rigid. This contrasts with the Yang-Lee case, isolating the sign of the dimension as the unique source of the failure.

\textit{Disclaimer: This appendix is logically independent of the Yang–Lee analysis. Its results apply only to categories that already admit a $C^*$-tensor structure.}

Let $A$ be a finite abelian group, $\chi$ a nondegenerate symmetric bicharacter on $A$, and $\xi \in \{\pm 1\}$. We consider the unitary fusion category $\mathcal{C} = \mathrm{TY}(A, \chi, \xi)$.

\begin{theorem}
The unitary structure of $\mathcal{C} = \mathrm{TY}(A, \chi, \xi)$ is unique up to unitary monoidal equivalence.
\end{theorem}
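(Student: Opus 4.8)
The plan is to pass to skeletal data and carefully track the gauge freedom. Because every Tambara--Yamagami category is multiplicity-free, each space $\Hom_\cat(k, i\otimes j)$ is at most one-dimensional, so a unitary ($C^*$) structure is the same data as a gauge in which every $F$-matrix is unitary; the full gauge group is a product of copies of $\CC^\times$ rescaling the one-dimensional fusion spaces, and the unitary gauge transformations form the compact subtorus of phases $U(1)$. In these terms the theorem asserts that two unitary solutions of the pentagon equations which are equivalent as plain monoidal categories are already related by a phase gauge transformation. I would begin by fixing the spherical data: in any unitary structure the categorical dimensions equal the Frobenius--Perron dimensions, forcing $d_a=1$ for $a\in A$ and $d_m=\sqrt{|A|}$ and pinning the canonical spherical normalization, so that only the associators remain to be compared.

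Next I would show that the defining pair $(\chi,\xi)$ is a set of intrinsic monoidal invariants, so any monoidal equivalence between two such realizations preserves it. The sign $\xi$ is the second Frobenius--Schur indicator $\nu_2(m)$ of the self-dual object $m$, a pivotal invariant that is insensitive to gauge; the bicharacter is read off from the associators carrying a single $m$-leg as $\chi(a,b)=F^{amb}_m$, and although gauge changes shift it by a coboundary, its symmetry and nondegeneracy are intrinsic and it is determined up to the relabeling action of $\mathrm{Aut}(A)$. Hence two monoidally equivalent unitary realizations share $(\chi,\xi)$ up to $\mathrm{Aut}(A)$, and after applying the corresponding object relabeling they are related by some gauge transformation $g$ with components in $\CC^\times$.

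The core step is to show that $g$ may be taken to lie in the phase torus. Writing each component as $u=|u|e^{i\theta}$, I would substitute into the transformation law of the $F$-symbols and use that both the source and target matrices are unitary: comparing moduli forces the positive parts $|u|$ to satisfy a multiplicative cocycle whose only solution is trivial, so $|u|=1$ throughout and $g$ lies in $U(1)$. This is cleanest on the three-$m$ associator $F^{mmm}_m$, which unitarity together with the pentagon relation forces to be the normalized Gauss matrix with entries proportional to $\overline{\chi(a,b)}$ and overall modulus $|A|^{-1/2}$ (using the orthogonality identity $\sum_{c\in A}\chi(a,c)\overline{\chi(b,c)}=|A|\,\delta_{a,b}$); any rescaling preserving this normalized form must have unit modulus. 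The phase freedom that remains is then exactly a unitary gauge transformation, which completes the desired equivalence.

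I expect the main obstacle to be this positivity-rigidity step: disentangling the modulus and phase parts of the comparison gauge and proving that the positive part collapses to the identity, rather than merely that the phase part realizes the equivalence. A secondary delicate point is re-deriving, within the unitary setting, that $\xi^2=1$, so that $\xi\in\{\pm1\}$ is a genuine $\ZZ/2$ sign rather than an arbitrary phase; this follows from the $F^{mmm}_m$ pentagon relation together with the reality forced by unitarity. Once these are in hand the remaining identifications are bookkeeping on the torus of phases. The argument also isolates the true source of failure in the main text: for Tambara--Yamagami the dimension $\sqrt{|A|}$ is non-integral yet positive, so a unitary gauge exists and is rigid, whereas for Yang--Lee the negative dimension $d_\tau<0$ obstructs the existence of any unitary gauge at all.
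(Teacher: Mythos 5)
Your proposal is correct in substance, but it takes a genuinely different route from the paper. The paper never touches skeletal data: it promotes $\mathcal{C}$ to a $W^*$-tensor category via unitary reconstruction, equips $A$ with the discrete topology so that Haar measure becomes counting measure and the continuous fusion rule $\tau\otimes\tau \cong L^2(G,\mu)$ collapses to $\bigoplus_{g\in A} g$, and then quotes the classification of continuous Tambara--Yamagami $W^*$-tensor categories (Theorem 4.17 of the cited work of Mar\'in-Salvador), a bijection between pairs $(\chi,\xi)$ and such categories up to unitary equivalence; rigidity is an immediate corollary of that external theorem. You instead argue directly on the torus of gauges: unitary structures correspond to gauges in which every $F$-matrix is unitary, the pair $(\chi,\xi)$ is a monoidal invariant up to $\mathrm{Aut}(A)$, and the comparison gauge between two unitary pentagon solutions can be replaced by its phase part. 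Your approach is more elementary and self-contained, makes the mechanism of rigidity transparent, and avoids leaning on a very recent operator-algebraic classification; the paper's approach is shorter given the external theorem and situates the finite case inside a framework valid for all locally compact groups. Two repairs to your sketch. First, the identification $\xi=\nu_2(m)$ is delicate: in general the second Frobenius--Schur indicator of $m$ differs from $\xi$ by a Gauss sum of $\chi$, so it is cleaner to take the invariance of $(\chi,\xi)$ up to $\mathrm{Aut}(A)$ from Tambara--Yamagami's original classification, which the paper already cites. Second, your conclusion $|u|=1$ throughout is too strong and not what the argument gives --- constant positive rescalings act trivially on all $F$-symbols while having $|u|\neq 1$ --- what is forced is only that the modulus part of the gauge acts trivially, and that is all you need. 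This follows from uniqueness of positive square roots rather than from the explicit Gauss-matrix form: if $U$ and $D_1 U D_2$ are both unitary with $D_1, D_2$ positive diagonal, then $U D_2^2 U^\dagger = D_1^{-2}$, hence $U D_2 U^\dagger = D_1^{-1}$, hence $D_1 U D_2 = U$. Applying this to each $F$-matrix shows the modulus part of your comparison gauge fixes every $F$-symbol, so the phase part alone realizes the equivalence, which is exactly the unitary gauge transformation the theorem demands.
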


\begin{proof}
Let $\cat$ be any unitary fusion category realizing the fusion ring of $\mathrm{TY}(A, \chi, \xi)$. To utilize the analytic classification framework, we must embed this finite structure into the domain of $W^*$-tensor categories.

First, we structure $\mathcal{C}$ as a $W^*$-tensor category. This is justified by the reconstruction theory for finite tensor categories \cite{yamagami1999ctensorcategoriesfreeproduct}: since $\mathcal{C}$ is a unitary fusion category, its objects are finite-dimensional Hilbert spaces and its morphisms are bounded linear operators. We embed this discrete problem into the analytic framework of Marín-Salvador \cite{marínsalvador2025continuoustambarayamagamitensorcategories} by constructing a locally compact group $G$ from the finite group $A$. We define $G = (A, \mathcal{T}_{\text{disc}})$ as the group $A$ equipped with the discrete topology.

Crucially, the definition of a continuous Tambara-Yamagami category requires the fusion rule for the defect $\tau$ to be defined via a direct integral over the Haar measure $\mu$:
\begin{equation}
    \tau \otimes \tau \cong \int_{G}^{\oplus} \mathcal{H}_g \, d\mu(g) \cong L^2(G, \mu).
\end{equation}
We must show this continuous definition aligns with the discrete fusion ring. Since $G$ is discrete, the Haar measure $\mu$ is the counting measure. Thus, the direct integral collapses to an orthogonal direct sum:
\begin{equation}
    L^2(G, \text{counting}) \cong \ell^2(A) \cong \bigoplus_{g \in A} \mathbb{C}_g.
\end{equation}
Thus, the finite fusion rule $\tau \otimes \tau \cong \bigoplus_{g \in A} g$ strictly satisfies the definition of a continuous Tambara-Yamagami $W^*$-tensor category for the discrete group $G$.

Having established the embedding, we invoke Theorem 4.17 of \cite{marínsalvador2025continuoustambarayamagamitensorcategories}. This theorem establishes a bijection between the set of pairs $(\chi, \xi)$ and the set of Tambara-Yamagami $W^*$-tensor categories for $G$, modulo equivalence of categories whose underlying functor is the identity. Since $G=A$ is discrete, every bicharacter $\chi: A \times A \to U(1)$ is automatically continuous. Therefore, the map from the algebraic data $(A, \chi, \xi)$ to the unitary equivalence class of the category is a bijection. It follows that any two finite realizations $\mathcal{C}_1, \mathcal{C}_2$ with the same fusion data must be unitarily monoidally equivalent.
\end{proof}

\nocite{*} 
\bibliographystyle{unsrt}
\bibliography{references}

\end{document}